\newcommand{\defn}[1]{\emph{\textbf{{#1}}}}
\newcommand{\R}{\mathbb{R}}
\renewcommand{\paragraph}[1]{\vspace{.5 cm} \noindent \textbf{#1} }
\renewcommand{\vec}[1]{#1}
\newtheoremstyle{slanted}
{3pt}
{3pt}
{\slshape}
{}
{\bfseries}
{.}
{.5em}
{}
\theoremstyle{slanted}
\newtheorem{theorem}{Theorem}
\newtheorem{lemma}[theorem]{Lemma}
\newtheorem{corollary}[theorem]{Corollary}
\begin{document}

\date{}

\title{A Nearly Tight Lower Bound for the $d$-Dimensional Cow-Path Problem}
\author{Nikhil Bansal\footnote{University of Michigan,  \url {bansal@gmail.com}. Supported in part by the NWO VICI grant 639.023.812} \phantom{aa} John Kuszmaul\footnote{Yale University, \url{john.kuszmaul@yale.edu}. Supported by NSF CCF-2106827.} \phantom{aa} William Kuszmaul\footnote{MIT CSAIL, \url{kuszmaul@mit.edu}.  Funded by a Fannie and John Hertz Fellowship and an NSF GRFP Fellowship. This research was also partially sponsored by the United States Air Force Research Laboratory and the United States Air Force Artificial Intelligence Accelerator and was accomplished under Cooperative Agreement Number FA8750-19-2-1000. The views and conclusions contained in this document are those of the authors and should not be interpreted as representing the official policies, either expressed or implied, of the United States Air Force or the U.S. Government. The U.S. Government is authorized to reproduce and distribute reprints for Government purposes notwithstanding any copyright notation herein.}}
\maketitle
\thispagestyle{empty}

\begin{abstract}
In the $d$-dimensional cow-path problem, a cow living in $\mathbb{R}^d$ must locate a $(d - 1)$-dimensional hyperplane $H$ whose location is unknown. The only way that the cow can find $H$ is to roam $\mathbb{R}^d$ until it intersects $\mathcal{H}$. If the cow travels a total distance $s$ to locate a hyperplane $H$ whose distance from the origin was $r \ge 1$, then the cow is said to achieve competitive ratio $s / r$.

It is a classic result that, in $\mathbb{R}^2$, the optimal (deterministic) competitive ratio is $9$. In $\mathbb{R}^3$, the optimal competitive ratio is known to be at most $\approx 13.811$. But in higher dimensions, the asymptotic relationship between $d$ and the optimal competitive ratio remains an open question. The best upper and lower bounds, due to Antoniadis et al., are $O(d^{3/2})$ and $\Omega(d)$, leaving a gap of roughly $\sqrt{d}$. In this note, we achieve a stronger lower bound of $\tilde{\Omega}(d^{3/2})$.
\end{abstract}
\vfill
\pagebreak

\newpage 
\pagenumbering{arabic}

\section{Introduction}

The cow-path problem is one of the simplest algorithmic problems taught to undergraduates: A cow begins at the origin on the number line, and must find a hay-stack located at some unknown point $p \in \mathbb{R}$ satisfying $|p| \ge 1$. How should the cow go about locating point $p$, if the cow wishes to optimize its worst-case competitive ratio, which is given by $s / |p|$ where $s$ is the total distance traveled by the cow?

The optimal solution is to perform a repeated doubling argument: the cow travels between the points $(-2)^0, (-2)^1, (-2)^2, \ldots$. This path can be shown to have a competitive ratio of 9, which is optimal for any deterministic solution \cite{beck1970yet} -- more generally, the class of paths that achieve competitive ratio 9 has been studied in great detail \cite{angelopoulos2018best}. Randomized solutions can do even better, achieving a competitive ratio of $\approx 4.591$. \cite{beck1970yet}.

In the decades since it was first introduced, the cow path problem has been generalized in many natural ways: to consider paths in a multi-lane highway \cite{kao1996searching, kao1998optimal}, paths that search for various type of objects \cite{baeza1995parallel},  etc. For a (slightly out of date) survey on these types of problems, see \cite{gal}.

Perhaps surprisingly, however, one of the most natural generalizations still remains quite enigmatic. In the \defn{$d$-dimensional cow-path problem}, the cow begins at the origin in $\mathbb{R}^d$, and must travel in search of a $(d - 1)$-dimensional hyperplane. Once the cow has intersected the hyperplane, their path is complete. Even in $d = 2$ dimensions, the optimal competitive ratio remains an open question -- it is conjectured to $\approx 13.811$, and to be achieved by a logarithmic spiral \cite{baeza1995parallel, finch2005searching}. In higher dimensions, even the \emph{asymptotic behavior} of the optimal competitive ratio remains open. The best upper and lower bounds, due to Antoniadis et al. \cite{antoniadis2022online}, are $O(d^{3/2})$ and $\Omega(d)$, leaving a gap of roughly $\sqrt{d}$. 

In this note, we settle the optimal $d$-dimensional competitive ratio up to low-order terms, presenting a simple argument for a lower bound of $\tilde{\Omega}(d^{3/2})$, or, more precisely, $\Omega(d^{3/2} / \sqrt{\log d})$. 

Concurrent work by Ghomi and Wenk \cite{GhomiWe22}, posted on arXiv a few days before this note, establishes a $d$-dimensional competitive ratio of $\Omega(d^{3/2})$ (communicated by Nazarov). The earlier work of Ghomi and Wenk \cite{ghomi2021shortest} considered the special case of $d=3$.

\section{Preliminaries}

In the \defn{$d$-dimensional cow-path problem}, a cow starts at $\vec{0}$ in $\mathbb{R}^d$, and wishes to find a $(d-1)$-dimensional hyperplane $H$ whose distance $r \ge 1$ from $\vec{0}$ is unknown.
The cow travels along a path until the cow intersects $H$. At this point, if the cow has traveled a total distance of $s$, then the cow is said to have achieved a competitive ratio of $\frac{s}{r}$. In general, a cow path is said to achieve competitive ratio $\alpha$ if its competitive ratio is bounded above by $\alpha$ for all $(d - 1)$-dimensional hyperplanes $H$ that are of distance at least $1$ from the origin.

Up to a constant factor in the optimal competitive ratio, using the standard doubling argument, one can assume without loss of generality that $H$ has distance $1$ from $\vec{0}$ \cite{antoniadis2022online}. Defining $\mathcal{H}$ to be the set of such hyperplanes, the competitive ratio of the cow path is simply the total distance traveled until the cow path has hit every $H \in \mathcal{H}$. 

As shown in \cite{antoniadis2022online}, using the polar duality between points and hyperplanes, this latter problem becomes equivalent to the following \defn{sphere-inspection problem}. Let $U$ be the unit sphere in $\R^d$ centered at the origin. We say that a point $p \in \mathbb{R}^d$, satisfying $|p| \ge 1$ where $|\cdot|$ denotes the standard Euclidean norm, \defn{views} a point $q \in U$ if the segment $\overline{pq}$ intersects $U$ only at $q$. (If $|p| < 1$, it does not view any points in $U$.) 
One can show that a path $P$ intersects every hyperplane $H \in \mathcal{H}$ if and only if each point $q \in U$ is visible from some point $p \in P$ \cite{antoniadis2022online}. This is because in order for the path $P$ (starting from the origin) to view a point $q \in U$, the path must intersect the hyperplane $H \in \mathcal{H}$ that is orthogonal to $\vec{q}$.   

Thus, the main result of this note can be reformulated as follows: any path $P$ that starts at the origin and views every point in $U$ must have length at least $\Omega(d^{3/2} / \sqrt{\log d})$. This then implies a $\Omega(d^{3/2} / \sqrt{\log d})$ lower bound for the optimal competitive ratio of the $d$-dimensional cow-path problem.

It is worth commenting on several other ways to think about the same problem \cite{antoniadis2022online}: A path $P$ views all of $U$ if and only if $U$ is contained in the convex hull of $P$. Also, a point $p \in P$ views a point $q \in U$ if and only if the angle $\angle(0, q, p)$ is greater than or equal to $90^\circ$---this, in turn, is equivalent to $\langle p-q,0-q\rangle \le 0$, which can be rewritten as $\langle p,q\rangle \geq |q|^2 =1$. Both the interpretations of sphere visibility (the convex-hull interpretation and the $\langle p, q \rangle \ge 1$ interpretation) will useful throughout our proofs.

We remark that, although the sphere-inspection problem and the $d$-dimensional cow path problem are asymptotically equivalent, their optimal competitive ratios differ within constant factors. Indeed, the exact optimal competitive ratio for the sphere-inspection problem is known for $d \le 3$ \cite{ghomi2021shortest}, while the optimal bound for the cow path problem remains open even for $d = 2$ \cite{baeza1995parallel, finch2005searching, antoniadis2022online}.

\section{The Lower Bound}

We will use $d$ as a global variable, indicating that we are in $\mathbb{R}^d$. And we will use $U= \{x \in \mathbb{R}^d \mid |x| = 1\}$ to denote the unit sphere.  We begin by recalling a standard result on the distribution of $U$'s mass.

\begin{lemma}[Lemma 2.2 of \cite{ball1997elementary}]
Let $\vec{v}$ be any point satisfying $|v| = 1$. The fraction of points $u \in U$ satisfying $\langle u, v \rangle \ge \epsilon$ is at most $e^{-d \epsilon^2 / 2}$. 
\label{lem:chernoff}
\end{lemma}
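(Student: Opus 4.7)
The plan is to prove the stated concentration bound by coupling the uniform measure on $U$ with the standard Gaussian on $\mathbb{R}^d$, and then reducing to a one-dimensional Gaussian tail estimate combined with the moment generating function of a chi-squared random variable. The statement is trivial when $\epsilon > 1$ (the event is empty), so I assume $0 \le \epsilon \le 1$.

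First, by rotational invariance of both the uniform measure on $U$ and the statement itself, I may take $v = e_1$. Sample $g = (g_1,\ldots,g_d) \sim \mathcal{N}(0,I_d)$; then $u := g/|g|$ is uniform on $U$, so the quantity I need to bound is $\Pr[g_1 \ge \epsilon |g|]$. Write $Y^2 := g_2^2 + \cdots + g_d^2 \sim \chi^2_{d-1}$, which is independent of $g_1$, so $|g|^2 = g_1^2 + Y^2$. Squaring the inequality $g_1 \ge \epsilon |g|$ (which forces $g_1 \ge 0$) and solving for $g_1$ shows that it is equivalent to $g_1 \ge \epsilon Y/\sqrt{1-\epsilon^2}$.

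Next, I condition on $Y$ and apply the standard Gaussian tail bound $\Pr[Z \ge t] \le \tfrac{1}{2}e^{-t^2/2}$ for $Z \sim \mathcal{N}(0,1)$ and $t \ge 0$. This gives
\[
\Pr[g_1 \ge \epsilon |g|] \;\le\; \tfrac{1}{2}\, \mathbb{E}\bigl[\exp\bigl(-\lambda Y^2\bigr)\bigr], \qquad \lambda := \frac{\epsilon^2}{2(1-\epsilon^2)}.
\]
Plugging $\lambda$ into the chi-squared moment generating function $\mathbb{E}[e^{-\lambda Y^2}] = (1+2\lambda)^{-(d-1)/2}$ and simplifying yields $(1-\epsilon^2)^{(d-1)/2}$. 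Using $1 - x \le e^{-x}$, this is at most $e^{-(d-1)\epsilon^2/2}$. Combining with the factor of $\tfrac{1}{2}$ out front gives $\tfrac{1}{2} e^{-(d-1)\epsilon^2/2}$.

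The only remaining bookkeeping is to check that this absorbs into $e^{-d\epsilon^2/2}$: the inequality $\tfrac{1}{2} e^{-(d-1)\epsilon^2/2} \le e^{-d\epsilon^2/2}$ is equivalent to $\epsilon^2 \le 2\ln 2 \approx 1.39$, which holds since $\epsilon \le 1$. The main obstacle is therefore not conceptual but a matter of matching constants; the naive Gaussian approach overshoots by one in the exponent, and it is the extra factor of $\tfrac{1}{2}$ from the Gaussian tail bound that converts $(d-1)$ into $d$. An alternative geometric route would compute the cap measure as a ratio of two beta-type integrals and bound $(1-t^2)^{(d-3)/2} \le e^{-(d-3)t^2/2}$ directly, but the Gaussian approach above seems cleaner and avoids explicit manipulation of gamma function ratios.
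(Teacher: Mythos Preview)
Your proof is correct. The paper itself does not prove this lemma --- it is simply quoted as Lemma~2.2 of Ball's survey and used as a black box --- so there is no in-paper argument to compare against. For what it is worth, Ball's original proof is geometric rather than probabilistic: he relates the cap measure to a volume ratio inside the solid unit ball and bounds that volume directly, arriving at $(1-\epsilon^2)^{d/2}\le e^{-d\epsilon^2/2}$ without passing through Gaussians. Your route via the Gaussian representation of the uniform measure on $U$, a one-dimensional tail bound, and the $\chi^2_{d-1}$ moment generating function is a standard and equally valid alternative; the only delicate point is the constant-matching you already flagged, where the sharpened Gaussian tail bound $\Pr[Z\ge t]\le \tfrac12 e^{-t^2/2}$ (valid for $t\ge 0$ by the substitution $s=t+u$ in the Gaussian integral) is exactly what converts the exponent $d-1$ back to $d$.
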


Lemma \ref{lem:chernoff} directly gives the following bound on the fraction of $U$ that is visible from any given point $p$.
\begin{lemma}
Let $\vec{p}$ be any point of size $|\vec{p}| \ge 1$. The fraction of points on the surface of $U$ that are visible from point $p$ is at most
$e^{-d / (2 |\vec{p}|^2)}$.
\label{lem:pointvis}
\end{lemma}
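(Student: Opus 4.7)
The plan is to reduce directly to Lemma~\ref{lem:chernoff} using the inner-product characterization of visibility given in the preliminaries. Recall that a point $p$ with $|p|\ge 1$ views $q\in U$ if and only if $\langle p,q\rangle \ge |q|^2 = 1$. So the set of visible points on $U$ is exactly
\[
V_p \;=\; \{\,q\in U \;:\; \langle p,q\rangle \ge 1\,\}.
\]

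First I would normalize: write $v = p/|p|$, so $|v|=1$, and rewrite the defining inequality as $\langle v,q\rangle \ge 1/|p|$. Setting $\epsilon = 1/|p|$, which lies in $(0,1]$ because $|p|\ge 1$, Lemma~\ref{lem:chernoff} applies and gives
\[
\operatorname{Pr}_{q\sim U}\!\bigl[\langle v,q\rangle \ge \epsilon\bigr] \;\le\; e^{-d\epsilon^{2}/2} \;=\; e^{-d/(2|p|^{2})},
\]
which is exactly the bound claimed. The fraction of $U$ that is visible from $p$ therefore satisfies the stated inequality.

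There is essentially no obstacle here; the lemma is a one-line corollary of Lemma~\ref{lem:chernoff} once one substitutes the visibility criterion $\langle p,q\rangle\ge 1$ and rescales by $|p|$. The only sanity check worth noting is that $\epsilon = 1/|p| \le 1$, so the Chernoff-type bound of Lemma~\ref{lem:chernoff} is being applied in its meaningful regime.
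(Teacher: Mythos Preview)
Your proof is correct and matches the paper's approach exactly: both use the visibility criterion $\langle p,q\rangle \ge 1$, normalize to $\hat p = p/|p|$, set $\epsilon = 1/|p|$, and invoke Lemma~\ref{lem:chernoff}. There is nothing to add.
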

\begin{proof}
Recall that a point $q$ is visible from $p$ iff $\langle p,q\rangle \geq 1$. Thus $\langle q,\hat{p}\rangle \geq 1/|p|  $ where $\hat{p}$ is the unit vector along $p$. Applying Lemma \ref{lem:chernoff} with $v=\hat{p}$ and $\epsilon=1/|p|$ gives the result.
%
%
\end{proof}

\begin{figure}
    \centering
    \includegraphics{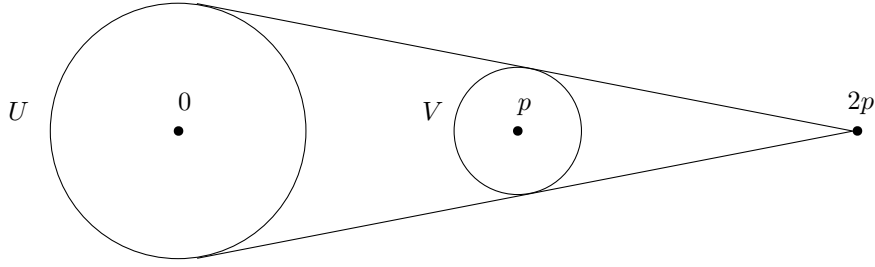}
    \caption{A visual illustration of how the points $q \in U$ visible from $V$ are also visible from point $2p$.}
    \label{fig:cones}
\end{figure}

Next we consider the fraction of $U$ that is visible from a given sphere $V$ of radius $1/2$. An important observation here, which has also appeared implicitly in past work \cite{antoniadis2022online}, is that if $V$ has center $\vec{p}$, then the portion of $U$ that is visible from $V$ is also visible from the point $2 \vec{p}$. For an accompanying illustration, see Figure \ref{fig:cones}.

\begin{lemma}
Let $\vec{p}$ be a point with $|\vec{p}| \ge 1$. Let $V$ be a sphere of radius $1/2$ centered around point $p$. Let $Q_1$ denote the set of points $q \in U$ such that $q$ is visible from some point $r \in V$, and let $Q_2$ denote the set of points visible from point $2\vec{p}$. Then $Q_1 \subseteq Q_2$.
\label{lem:subpathvis}
\end{lemma}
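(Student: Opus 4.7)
The plan is to use the inner-product characterization of visibility noted in the Preliminaries: for a point $x$ with $|x| \ge 1$, we have $q$ visible from $x$ iff $\langle x, q\rangle \ge 1$. This reduces the lemma to a short inner-product calculation.

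First I would take an arbitrary $q \in Q_1$ and unpack what it means: there exists a point $r \in V$ from which $q$ is visible. Such an $r$ must satisfy $|r| \ge 1$ (otherwise it sees nothing), and the visibility condition gives $\langle r, q\rangle \ge 1$. Next, since $r$ lies in the radius-$1/2$ sphere centered at $p$, I would write $r = p + w$ with $|w| \le 1/2$, so that
\[
\langle p, q\rangle = \langle r, q\rangle - \langle w, q\rangle \ge 1 - \langle w, q\rangle.
\]
Then I would apply the Cauchy--Schwarz inequality $|\langle w, q\rangle| \le |w|\cdot |q| \le (1/2)\cdot 1 = 1/2$, which yields $\langle p, q\rangle \ge 1 - 1/2 = 1/2$. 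Doubling, $\langle 2p, q\rangle \ge 1$. Finally, $|2p| = 2|p| \ge 2 \ge 1$, so the point $2p$ is eligible to see points of $U$, and the inner-product condition confirms $q \in Q_2$. Since $q \in Q_1$ was arbitrary, $Q_1 \subseteq Q_2$.

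There isn't really a main obstacle here; the key conceptual content is just the $\langle p, q\rangle \ge 1$ reformulation of visibility (already established in the Preliminaries), together with the observation that shifting $p$ by a vector of length at most $1/2$ can decrease an inner product with a unit vector by at most $1/2$. The factor of $2$ blow-up in $p$ then exactly compensates for this loss, which is precisely why the radius $1/2$ and the scaling factor $2$ match up in the statement.
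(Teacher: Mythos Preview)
Your proof is correct and essentially identical to the paper's: both write $r = p + w$ with $|w|\le 1/2$, use Cauchy--Schwarz (together with $|q|=1$) to get $\langle p,q\rangle \ge 1/2$, and then double to obtain $\langle 2p,q\rangle \ge 1$. Your explicit checks that $|r|\ge 1$ and $|2p|\ge 1$ are a nice touch the paper leaves implicit.
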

\begin{proof}
Consider any point $q \in Q_1$.
As $q$ is visible from $r$, we have $\langle q,r\rangle \geq 1$. As $r \in V$, we can write $r=p + v$ for some $v$ with $|v|\leq 1/2$. Thus we have,
\[1 \leq \langle q,r\rangle = \langle q, p+v\rangle =  \langle q,p \rangle + \langle q,v\rangle \leq  \langle q,p \rangle  + |q||v| \leq \langle q,p \rangle + 1/2, \]
where the last step uses that $|q|=1$ and $|v|\leq 1/2$.
Thus $\langle q,p \rangle \geq 1/2$, or equivalently, $\langle 2p,q \rangle \geq 1$, which implies that $q$ is visible from the point $2p$. This completes the proof that $Q_1 \subseteq Q_2$.
\end{proof}



Combining the previous lemmas, we conclude that the only way for a path to see the entire sphere $U$ is if either (1) the path is very long, or (2) the path is, at some point, quite far away from the origin. 
\begin{lemma}
Consider a path $P$ that never surpasses distance $r$ from the origin, but that views all of $U$. We must have
\[|P| \ge e^{d / (8 r^2)} - 1.\]
\end{lemma}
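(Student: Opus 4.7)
The plan is to cover the path $P$ by balls of radius $1/2$, use Lemma \ref{lem:subpathvis} to replace visibility from each ball by visibility from a single ``doubled'' center, bound that single-point visibility via Lemma \ref{lem:pointvis}, and conclude by a union bound that forces the number of balls (and hence $|P|$) to be at least $e^{d/(8r^2)}$.

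In more detail, I would parameterize $P$ by arclength on $[0, |P|]$ and place ball-centers $p_1, \ldots, p_N$ at arclength positions $1/2, 3/2, 5/2, \ldots$. Because the Euclidean distance between two points on $P$ is at most the arclength between them, each ball $V_i$ of radius $1/2$ around $p_i$ contains every point of $P$ within arclength $1/2$ of $p_i$, and a simple count gives $N = \lceil |P| \rceil \le |P| + 1$ balls whose union covers all of $P$.

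Next, apply Lemma \ref{lem:subpathvis} to each $V_i$: every point of $U$ that is visible from some point of $V_i$ is also visible from $2 p_i$. Since $p_i$ lies on $P$, we have $|p_i| \le r$. If $|2 p_i| < 1$, then $2 p_i$ sees nothing in $U$; otherwise Lemma \ref{lem:pointvis} gives that $2p_i$ sees at most an
\[
e^{-d / (2 |2 p_i|^2)} \;=\; e^{-d/(8 |p_i|^2)} \;\le\; e^{-d/(8 r^2)}
\]
fraction of $U$. In either case the $V_i$-visible fraction is at most $e^{-d/(8r^2)}$.

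Because $P$ views all of $U$, a union bound over the $N$ balls gives $N \cdot e^{-d/(8 r^2)} \ge 1$, i.e., $N \ge e^{d/(8r^2)}$, and combined with $N \le |P| + 1$ this yields $|P| \ge e^{d/(8r^2)} - 1$. The argument is essentially routine given the two prior lemmas; the only point requiring any care is setting up the covering so that one gets $N \le |P| + 1$ (rather than $|P|+2$) balls, since this is what makes the additive slack match the ``$-1$'' in the statement.
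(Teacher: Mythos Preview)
Your proposal is correct and follows essentially the same approach as the paper: both chop $P$ into $\lceil |P|\rceil$ pieces of arclength at most $1$, enclose each in a radius-$1/2$ ball about its midpoint, invoke Lemma~\ref{lem:subpathvis} to pass to the doubled center, and then apply Lemma~\ref{lem:pointvis} with $|2p_i|\le 2r$ to get the per-piece bound $e^{-d/(8r^2)}$ before summing. Your handling of the degenerate case $|2p_i|<1$ is a small extra bit of care the paper omits, but otherwise the arguments are the same.
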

\begin{proof}
Break $P$ into $\lceil |P| \rceil$ sub-paths $P_1, P_2, \ldots$, all but one of which have length $1$. Each $P_i$ is contained in a sphere $S_i$ of radius $1/2$ centered at the midpoint $p_i$ of $P_i$. By Lemma \ref{lem:subpathvis}, every point in $U$ visible from the subpath $P_i$ is also visible from the point $2p_i$. Since $2p_i$ is distance at most $2r$ from the origin, it follows by Lemma \ref{lem:pointvis} that the fraction of $U$ visible from $P_i$ is at most $e^{-d / (8 r^2)}.$
Since the $P_i$s collectively view all of $U$, we can conclude that $\lceil |P|  \rceil \ge e^{d / (8 r^2)}$, and thus that $|P| \ge e^{d / (8 r^2)} - 1$. 
\end{proof}

\begin{corollary}
Consider any path $P$ that views all of $U$. The path $P$ must either reach distance $\sqrt{d / (16 \log d)}$ from the origin or have length $|P| \ge e^{2 \log d} - 1 \ge \Omega(d^2)$. 
\label{cor:key}
\end{corollary}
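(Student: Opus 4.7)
The plan is to dispatch this corollary as an essentially one-line consequence of the preceding lemma, via a case split on the maximum distance the path attains from the origin. Let $r^\ast = \sqrt{d/(16 \log d)}$. If $P$ reaches distance at least $r^\ast$ from $\vec{0}$, then the first alternative of the corollary holds immediately, and we are done. Otherwise $P$ is contained in the ball of radius $r^\ast$ around the origin, so the hypotheses of the preceding lemma are satisfied with $r = r^\ast$, giving $|P| \ge e^{d/(8 (r^\ast)^2)} - 1$.

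The rest is just arithmetic inside the exponent: substituting $(r^\ast)^2 = d/(16 \log d)$ yields $d/(8(r^\ast)^2) = 16 \log d / 8 = 2 \log d$, so the bound from the lemma becomes $|P| \ge e^{2 \log d} - 1 = d^2 - 1 = \Omega(d^2)$. This is precisely the second alternative of the corollary.

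It is worth noting that the choice of $r^\ast$ is made to balance the two alternatives so that the exponential $e^{d/(8 r^2)}$ collapses to the polynomial $d^2$ --- which is comfortably larger than the target lower bound of $\tilde{\Omega}(d^{3/2})$ to be established later, while keeping $r^\ast$ itself of order $\tilde{\Theta}(\sqrt{d})$. I do not expect any real obstacle here; all the substantive work lives in the preceding lemma, and only the bookkeeping of the case split and the exponent computation remains.
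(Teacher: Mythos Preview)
Your proposal is correct and is exactly the intended argument: the paper states this corollary without proof precisely because it follows from the preceding lemma by plugging in $r = \sqrt{d/(16\log d)}$, and your arithmetic in the exponent is right.
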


We can now prove our main result.

\begin{theorem}
Any path $P$ that start at the origin and views all of $U$ must have length $|P| \ge \Omega(d^{3/2} / \sqrt{\log d})$. 
\end{theorem}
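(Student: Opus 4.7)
The plan is to use Corollary~\ref{cor:key} as a dichotomy. If $|P| \ge e^{2\log d}-1 = d^2 - 1$, then immediately $|P| = \Omega(d^2) \ge \Omega(d^{3/2}/\sqrt{\log d})$ and we are done. Otherwise, Corollary~\ref{cor:key} tells us that $P$ must reach distance at least $R := \sqrt{d/(16\log d)}$ from the origin. Note that the target bound $\Omega(d^{3/2}/\sqrt{\log d})$ is (up to constants) $\Omega(dR)$, so the remaining task is to show that merely reaching distance $R$ once is not enough: the path must effectively perform $\Omega(d)$ ``excursions'' toward this far distance.

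To set up the argument, I would partition $P$ into $\lceil |P|\rceil$ unit-length sub-paths with midpoints $p_1,\ldots,p_n$. By Lemma~\ref{lem:subpathvis}, the visibility of each sub-path is contained in the visibility of the point $2p_i$, so coverage of $U$ is equivalent to the statement that for every $q \in U$ some $p_i$ satisfies $\langle p_i, q\rangle \ge 1/2$, i.e.\ the convex hull of the midpoints contains $B^d(1/2)$. Combined with Lemma~\ref{lem:pointvis}, each unit sub-path sitting at distance at most $R$ from the origin covers at most $e^{-d/(8R^2)} = 1/d^2$ fraction of $U$, so the portion of $P$ lying entirely in $B(0,R)$ can contribute only a $|P|/d^2$ fraction of coverage; for $|P| \ll d^2$ this is negligible, and nearly all of $U$ must be covered by subpaths at distance $\ge R$.

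The heart of the argument --- and what I expect to be the main obstacle --- is converting this into the claimed $\Omega(dR)$ lower bound. The standard mean-width inequality $\mathrm{MW}(\mathrm{conv}(P)) \le O(|P|/\sqrt{d})$ only gives $|P| = \Omega(\sqrt{d})$, so a finer argument is needed. My approach would be a dyadic scale analysis: for each dyadic distance $2^k$, let $L_k$ be the length of $P$ within $B(0,2^k)$, and use the subpath-visibility bound $\lceil L_k\rceil \cdot e^{-d/(8\cdot 4^k)}$ on what that portion covers. Balancing these constraints across all scales up to $R$, while using the constraint that consecutive midpoints are within distance 1 of each other (so the angular trace $\hat{p}_i \in S^{d-1}$ of subpaths at distance $\sim R$ has angular arc length $\le L/R$), forces the angular trace to form a ``net'' of $S^{d-1}$ fine enough to give convex-hull inscribed radius $\ge 1/(2R)$. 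The number of distinct angular directions required to achieve this, multiplied by the per-direction spatial cost of $\Omega(R)$, yields the desired $\Omega(dR) = \Omega(d^{3/2}/\sqrt{\log d})$ bound on $|P|$.
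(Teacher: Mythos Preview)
Your setup through Corollary~\ref{cor:key} matches the paper, but the ``heart of the argument'' you sketch is not a proof. The dyadic coverage inequality $\sum_k L_k\, e^{-d/(8\cdot 4^k)} \ge 1$ by itself gives nothing useful: at scale $2^k \sim \sqrt{d}$ the exponential is $\Theta(1)$, so $L_k = O(1)$ already satisfies it. Your angular-trace idea is the right instinct---the directions in which the far-away subpaths point must be well spread---but your final sentence (``the number of distinct angular directions required to achieve this, multiplied by the per-direction spatial cost of $\Omega(R)$'') is the conclusion, not an argument. You have not explained why $\Omega(d)$ directions are forced, nor why each one costs $\Omega(R)$ of \emph{fresh} path length rather than being reached cheaply from a nearby previous excursion.

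The paper supplies the missing mechanism via an iterated-projection argument. Having found the first time $t_1$ with $|P(t_1)| \ge \tau$, project $P$ orthogonally onto the hyperplane through the origin perpendicular to $P(t_1)$. The projected path $P_2$ is no longer than $P$, still starts at the origin, and still views the entire unit sphere of that $(d-1)$-dimensional subspace (for $q$ in the subspace, $\langle p,q\rangle = \langle \pi(p),q\rangle$). Hence Corollary~\ref{cor:key} applies again in dimension $d-1$: $P_2$ first reaches distance $\tau$ at some time $t_2$. Since $P_2(t_1)=0$, one gets $t_2 \ge t_1$ and $P_2$---hence $P$---travels at least $\tau$ on $[t_1,t_2]$. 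Iterating $\lfloor d/2\rfloor$ times with the uniform choice $\tau = \sqrt{(d/2)/(16\log(d/2))}$ yields $|P| \ge \lfloor d/2\rfloor\cdot \tau = \Omega(d^{3/2}/\sqrt{\log d})$. This projection step is precisely what turns ``must go far once'' into ``must go far $\Omega(d)$ times, in mutually orthogonal directions, at disjoint time intervals'', and it is absent from your proposal.
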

\begin{proof}
Let $\tau = \sqrt{\frac{d}{2} / (16 \log \frac{d}{2})}$. Suppose that $|P|$ has length $O(d^{3/2})$. By Corollary \ref{cor:key}, $P$ must reach distance $\tau$ from the origin -- let $P(t_1)$ be the first point at which this happens. Let $P_2$ be the path $P$ projected onto the $(d - 1)$-dimensional hyperplane orthogonal to $\vec{P(t_1)}$ (and going through the origin). Since $|P_2| \le |P|$, we know that $|P_2| \le O(d^{3/2})$, so by Corollary \ref{cor:key}, $P_2$ must reach distance $\tau$ from the origin -- let $P_2(t_2)$ be the first point at which this happens. Let $P_3$ be the path $P_2$ projected onto the $(d - 2)$-dimensional hyperplane orthogonal to $\vec{P(t_2)}$ (and going through the origin). Since $|P_3| \le |P_2|$, we know that $|P_3| \le O(d^{3/2})$, so by Corollary \ref{cor:key}, $P_3$ must reach distance $\tau$ from the origin -- let $P_3(t_3)$ be the first point at which this happens. Continuing like this, we can define $t_1, \ldots t_{\lfloor d / 2 \rfloor }$ and $P_1, \ldots, P_{\lfloor d / 2 \rfloor }$ so that each $P_i(t_i)$ is distance at least $\tau$ from the origin and so that each $P_{i + 1}$ is the path $P$ projected onto the $(d - i + 1)$-dimensional hyperplane (going through the origin) that is orthogonal to each of $P(t_1), P_2(t_2), \ldots, P_i(t_i)$. 

An important observation is that $t_1 \le t_2 \le t_3 \le \cdots \le t_{\lfloor d / 2 \rfloor }$. Indeed, if $t_{i + 1} < t_i$, then since $|P_i(t_{i + 1})| \ge |P_{i + 1}(t_{i + 1})| \ge \tau$, we would have that $t_i$ was not the first time at which $P_i$ reached distance $\tau$ from the origin, a contradiction. 

Also observe that the distance traveled by $P$ between time $t_i$ and time $t_{i + 1}$ is at least the distance traveled by $P_{i + 1}$ during that time interval. Since $P_{i + 1}(t_i) = \vec{0}$ and $|P_{i + 1}(t_{i + 1})| = \tau$, it follows that $P$ travels distance at least $\tau$ between $t_i$ and $t_{i + 1}$. The total length of $P$ is therefore at least
\[\sum_{i = 1}^{\lfloor d / 2 \rfloor} |P_i| \ge \lfloor d/2 \rfloor \cdot \tau \ge \Omega(d^{3/2} / \sqrt{\log d}). \qedhere\]
\end{proof}

\begin{corollary}
The optimal competitive ratio for the $d$-dimensional cow-path problem is $\Omega(d^{3/2} / \sqrt{\log d})$.
\end{corollary}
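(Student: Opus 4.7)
The plan is to derive this corollary directly from the preceding theorem by unwinding the two reductions laid out in the Preliminaries, so there is no new technical content to introduce. First I would invoke the doubling reduction recalled from Antoniadis et al.: up to a constant factor in the optimal competitive ratio, one may assume the adversary's hyperplane $H$ lies at distance exactly $1$ from the origin. After this normalization, the worst-case competitive ratio of a cow path $P$ (starting at $\vec{0}$) coincides with the total length of $P$, provided $P$ hits every hyperplane in the set $\mathcal{H}$ of unit-distance hyperplanes.

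Next I would translate ``hits every hyperplane in $\mathcal{H}$'' into ``views every point of $U$'' by way of the polar-duality identification already stated in the Preliminaries: for any $q \in U$, the cow crosses the hyperplane orthogonal to $\vec{q}$ at some point $p$ exactly when $\langle p,q\rangle \ge 1$, which is the visibility condition. Thus a path serves every instance of the (normalized) cow-path problem if and only if it views every $q \in U$. Applying the theorem just proved, the length of $P$ must satisfy $|P| \ge \Omega(d^{3/2}/\sqrt{\log d})$.

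Finally I would combine these two steps: since $r$ is at most a constant under the normalization, dividing $|P|$ by $r$ only loses a constant factor, yielding the stated $\Omega(d^{3/2}/\sqrt{\log d})$ lower bound on the optimal competitive ratio of the $d$-dimensional cow-path problem. There is no real obstacle here; the only thing to be careful about is citing the doubling reduction precisely enough that the constant-factor loss in going from the ``sphere-inspection'' formulation back to the original cow-path formulation is quantified, so that the asymptotic lower bound transfers cleanly.
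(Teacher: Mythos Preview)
Your proposal is correct and matches the paper's approach: the paper states this corollary without proof, relying on exactly the two reductions from the Preliminaries (the doubling argument reducing to unit-distance hyperplanes, and the polar-duality equivalence with sphere inspection) that you invoke to transfer the theorem's length lower bound to a competitive-ratio lower bound. There is nothing to add or change.
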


\bibliographystyle{plainurl} \bibliography{writeup}

\end{document}